\documentclass[11pt]{article}
\usepackage{authblk}

\usepackage[latin1]{inputenc}
\usepackage[T1]{fontenc}
\usepackage{amsmath}
\usepackage{amsfonts}
\usepackage{amssymb}

\usepackage[normalem]{ulem} 

% Useful packages
\usepackage{amsmath,xcolor,ulem}
\usepackage{amsfonts}
\usepackage{amssymb,todonotes}
\usepackage{amsthm}
\usepackage{graphicx}
\usepackage{marginnote}

\usepackage[colorlinks=true, allcolors=blue]{hyperref}

\usepackage[top=2.8cm,bottom=2.8cm,left=2.5cm,right=2.5cm]{geometry}

%\usepackage{float}
% for algorithm
%\usepackage[algoruled]{algorithm2e}
% \usepackage[font=footnotesize,width=\linewidth]{caption} 

\usepackage{enumitem} % useful functionality for enumerated lists

% macros
%%%%%%%%%%%%%%%%%%%%%%%%%%%%%%%%%%%
\newcommand{\bbr}{\mathbb{R}}

\newcommand{\bbc}{\mathbb{C}}

\newcommand{\iu}{\mathrm{i}} % imaginary unit
\DeclareMathOperator{\Imag}{Im}
\DeclareMathOperator{\Real}{Re}

\newcommand{\norm}[1]{\left\lVert#1\right\rVert}
\newcommand{\modulus}[1]{\left\lvert#1\right\rvert}
\newcommand{\inner}[2]{\left( #1 \middle\vert #2 \right)}

% environments
%%%%%%%%%%%%%%%%%%%%%%%%%%%%%%%%%%%%%%%%%%%%%
\newtheorem{satz}{Satz}[section]

\newtheorem{proposition}[satz]{Proposition}
\newtheorem{corollary}[satz]{Corollary}

\theoremstyle{definition}

\newtheorem{definition}[satz]{Definition}

\newtheorem{remark}[satz]{Remark}

\newtheorem{example}[satz]{Example}

%%%%%%%%%%%%%%%%%%%%%%%%%%%%%%%%%%%%%%%%%%%%%%%%%
\title{Square Root Operators and the Well-Posedness of Pseudodifferential Parabolic Models of Wave Phenomena}

\author{Matthias Ehrhardt\footnote{Corresponding author, \href{mailto:ehrhardt@uni-wuppertal.de}{ehrhardt@uni-wuppertal.de}},
Jochen Gl\"uck\footnote{ \href{mailto:glueck@uni-wuppertal.de}{glueck@uni-wuppertal.de}} ,
Pavel Petrov\footnote{ \href{pavel.petrov@impa.br}{pavel.petrov@impa.br}} ,
Stefan Tappe\footnote{ \href{mailto:tappe@uni-wuppertal.de}{tappe@uni-wuppertal.de}} 
}

\affil{Applied and Computational Mathematics, University of Wuppertal, Germany}
\affil{Instituto de Matematica Pura e Aplicada, Rio de Janeiro, Brazil}
\affil{Functional Analysis, University of Wuppertal, Germany}
\affil{Stochastics, University of Wuppertal, Germany}

\begin{document}
\maketitle

\begin{tikzpicture}[remember picture,overlay]
	\node[anchor=north east,inner sep=20pt] at (current page.north east)
	{\includegraphics[scale=0.2]{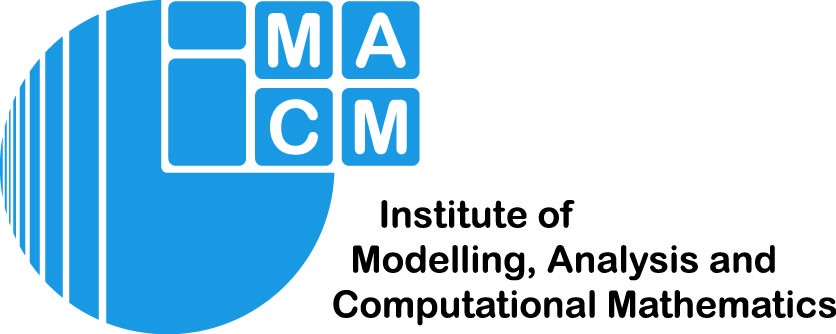}};
\end{tikzpicture}

\begin{abstract}
Pseudodifferential parabolic equations with an operator square root arise in wave propagation problems as a one-way counterpart of the Helmholtz equation. 
The expression under the square root usually involves a differential operator and a known function. 
We discuss a rigorous definition of such operator square roots and show well-posedness of the pseudodifferential parabolic equation by using the theory of strongly continuous semigroups. 
This provides a justification for a family of widely-used numerical methods for wavefield simulations in various areas of physics.   
\end{abstract}

\begin{minipage}{0.9\linewidth}
 \footnotesize
\textbf{AMS classification:} 35S10, 47G30, 76Q05

\medskip

\noindent
\textbf{Keywords:} pseudo-differential equation, square root of operator, numerical range, well-posedness, sectorial operator
\end{minipage}

%%%%%%%%%%%%%%%%%%%%%%%%%%%%
\section{Introduction}

A large family of one-way propagation equations used in the numerical modeling of wave phenomena is known as the \textit{parabolic wave equations} (PWEs). 
They have their origin in the work of Leontovich and Fock  \cite{leontovich}, in which a model for the simulation of radio waves was proposed. 
Since then, a wide variety of PWEs have been developed to solve practical problems in seismics, acoustics, optics, and electrical engineering \cite{claerbout1985,jensen2011,LuReview,lytaev2022rational} (sometimes referred to as beam propagation methods). 
In this approach, boundary value problems for the Helmholtz equation in a waveguide are replaced by Cauchy problems for PWEs, which are more convenient to handle numerically using marching solution techniques. 
Another reason for the success of this approach is the ability to cancel out certain oscillatory terms,  thus removing a wavelength resolution limitation associated with the step size.
% thus removing a limitation associated with wavelength resolution by the step size.

Currently, the standard approach to deriving PWEs is based on a formal factorization of the Helmholtz operator. 
Such factorization leads to evolutionary equations involving the square root of a differential operator \cite{fishman1984}, sometimes called \textit{pseudodifferential parabolic equations} (PDPEs) (they also appear in the literature under a variety of names, e.g., very-wide-angle parabolic equations). 
Many modern wave propagation techniques are designed to solve PDPEs directly \cite{LuReview,petrov2020,he2024} (rather than first rewriting them, e.g., using some approximation of the square root operator). 
An important example is a powerful method called \textit{Split-Step Pad\'e} (SSP) \cite{collins1994,petrov2024generalization}, which was a breakthrough in both accuracy and performance in underwater acoustics.

Despite the existence of a large number % vast amount
of works on PDPEs and their practical importance, to the best of our knowledge the questions of uniqueness, existence and well-posedness for such equations have not been addressed in the literature. 
In fact, most of the research on this topic does not even rigorously define the square root operator \cite{lytaev2022rational,he2024,petrov2020}. 
This letter aims to fill this gap. 
We provide a derivation of the most common PDPE form and rigorously define the square root operator in this equation.
We then prove the uniqueness and existence of the PDPE solution using the semigroup property of the latter operator. 
Our discussion includes % covers 
the piecewise constant dependence of the propagation medium on the range (i.e., on the evolutionary variable).

%%%%%%%%%%%%%%%%%%%%%%%%%% Section 2
\section{The Pseudo-Differential Parabolic Equation}\label{sec:pdpe}

Consider the \textit{two-dimensional Helmholtz equation} in Cartesian coordinates $(x,y)$,
\begin{equation}
    \label{2DHE}
    \frac{\partial^2 u}{\partial x^2} + \frac{\partial^2 u}{\partial y^2} + k^2(x,y)\,u = 0\,,
\end{equation}
where $u = u(x,y)$ is the unknown function and $k = k(x,y)$ is a coefficient called medium wavenumber. In practice, $k(x,y)$ is usually a complex quantity with both real and imaginary parts being positive and bounded both from above and from below (the imaginary part represents wave absorption, and it is usually much smaller than the real part).

Assuming that the coefficient dose not depend on $x$ (this is a preferred direction of propagation called waveguide axis along which the medium parameters change very slowly \cite{leontovich,jensen2011,petrov2020}), we can factorize the operator in Eq.~\eqref{2DHE} to get
\begin{equation}\label{2DHE2}
    \left(\frac{\partial }{\partial x} + \iu\sqrt{\frac{\partial^2 }{\partial y^2}+k^2(y)}\right)\left(\frac{\partial }{\partial x} - \iu\sqrt{\frac{\partial^2 }{\partial y^2}+k^2(y)}\right)u = 0\,,
\end{equation}
where the factors on the left-hand side correspond to leftward and rightward one-way propagation of the wave, respectively. 
Without loss of generality, hereafter we consider the latter case and rewrite the one-way counterpart of Eq.~\eqref{2DHE2} as the PDPE
\begin{equation}\label{PDPE}
   \frac{\partial u}{\partial x} = \iu \sqrt{A}\, u
\end{equation}
for $x>0,\;-\infty<y<\infty$, where $A = \frac{\partial^2}{\partial y^2} +  k^2$ is a differential operator acting on the functions of the coordinate $y$ representing a direction transverse to the waveguide axis.
Eq.~\eqref{2DHE} arises in various physical settings. 
For example, in underwater acoustics, $u(x,y)$ can describe the acoustic pressure field in a vertical plane \cite{collins1994,jensen2011}, where $x$ denotes the range and $y$ the depth (in this case,  Eq.~\eqref{2DHE} is usually supplemented by boundary conditions, representing the sea surface and the bottom, e.g., $u(x,0)=0$ and $u(x,H)=0$). 
Eq.~\eqref{2DHE} on the entire half-space $x\geq 0$ can be considered as a one-way counterpart of the horizontal refraction equation for one mode amplitude in the adiabatic approximation \cite{jensen2011,petrov2020}. 

The derivation of Eq.~\eqref{2DHE} is somewhat heuristic. 
However, once it is obtained it is desirable to establish that, for an initial condition $u(0,y) = u_0(y)$ at $x=0$, the Cauchy problem in Eq.~\eqref{PDPE} is well-posed in an appropriate function space. 
Well-posedness means existence and uniqueness of the solution together with continuous dependence on the initial value (see e.g.\ \cite[Section~II.6]{Engel-Nagel} for a more thorough discussion). 
For Eq.~\eqref{PDPE} well-posedness is equivalent to the property that $\iu\sqrt{A}$ generates a strongly continuous semigroup $(T_x)_{x \geq 0}$ \cite[Theorem~II.6.7]{Engel-Nagel}. 
In this case, for each initial value $u_0 = u_0(y)$ the unique solution of Eq.~\eqref{PDPE} is given by $u(x,y) = T_x u_0(y)$.

% JG: A abbreviated the following explanations a bit (see above), in order to stay in the 6 pages limit.
%
% Recall that the Cauchy problem \eqref{PDPE}, \eqref{IC} is called well-posed on some Hilbert space $H$, such that the domain of definition $D(A)$ of the operator $A$ is a subset of $H$ if 
% \begin{enumerate}[label=\upshape(\alph*)]
%         \item $D(A)$ is dense in $H$;
%         \item for each $u_0(y) \in D(A)$ there exists a unique solution $u = u(x,y)$ of Eq.~\eqref{PDPE}
%         \item if $u_{0,n}, u_0\in D(A)$ and the initial functions $u_{0,n}$ converge to $u_0$ in $H$ as $n\to\infty$, 
%         then the respective solutions of Eq.~\eqref{PDPE} $u_n(x,y)$ converge to $u(x,y)$ uniformly on any compact interval $x\in [0, L]$.
% \end{enumerate}
% Recall that well-posedness as defined above is equivalent to the fact that $\iu\sqrt{A}$ generates a strongly continuous semigroup $(T_x)_{x \geq 0}$. 
% In this case, for each $u_0(y) \in D(A)$ the unique solution of Eq.~\eqref{PDPE} is given by $u(x,y) = T_x u_0(y)$.

%%%%%%%%%%%%%%%%%%%%%%%%%%%%%%%%%%%%%%%%%%%%%%%%%%%%%%%%%%%%%%%%%%%%%%
\section{Definition and generator properties of $\iu\sqrt{A}$}\label{sec:sqroots}

% JG: I commented out the following paragraph because it does not seem directly relevant to the contents of the paper and because some of the claims there seem to be inaccurate.
% 
%The notion of the square root of an operator arises naturally in several areas of functional analysis, partial differential equations (PDEs), and quantum mechanics. 
%For example, in the study of evolution equations, it is often necessary to consider expressions such as $A^{1/2}$, where $A$ is a (possibly unbounded) operator representing some physical or differential process. 
%Having a well-defined and well-behaved square root allows one to define semigroups, analyze stability, and perform functional calculus on the operator.

In this section we discuss criteria for the existence and uniqueness of square roots of unbounded operators on a Hilbert space. 
This will give a precise meaning to the expression $\sqrt{A}$ in the PDPE~\eqref{PDPE}. 
Let $H$ be a complex Hilbert space. 
We use the convention that the inner product is anti-linear in the first argument and linear in the second. 
Let $A\colon H \supseteq D(A) \to A$ be a closed linear operator. 
We denote the \textit{spectrum} of $A$ by $\sigma(A)$; 
its complement $\rho(A) := \bbc \setminus \sigma(A)$ is called the \textit{resolvent set} of $A$.
We will first discuss uniqueness and then existence.
Even for complex numbers (rather than operators) the complex square root is only unique if one imposes an additional assumption on the argument of the root. 
Similarly, the spectral conditions in the following proposition ensure the uniqueness of a square root operator, provided that it exists. 

%%%%%%%%%%%%%%%%%%%%%%%%%%%%%%%%%%%%%%%%%%%
\begin{proposition}[Uniqueness of square roots]
    \label{prop:square-root-unique}
    Let $H$ be a complex Hilbert space and let $A\colon H \supseteq D(A) \to H$ be a closed linear operator 
    such that $(-\infty,0] \subseteq \rho(A)$. 
    There exists at most one closed linear operator $B\colon H \supseteq D(B) \to H$ with the following properties: 
    \begin{enumerate}[label=\upshape(\alph*)]
        \item\label{prop:square-root:itm:root}
        $B^2 = A$.
        
        \item\label{prop:square-root:itm:spec}
        The spectrum $\sigma(B)$ is contained in the right half plane $\{\lambda \in \bbc \mid \Real \lambda \ge 0\}$.
    \end{enumerate}
\end{proposition}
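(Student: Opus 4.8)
The plan is to take two operators $B_1,B_2$ satisfying \ref{prop:square-root:itm:root} and \ref{prop:square-root:itm:spec}, compare their resolvents, and show that these coincide on the open left half-plane; closedness of the $B_j$ then forces $B_1=B_2$. The starting point is the factorisation $\lambda^2-A=(\lambda-B)(\lambda+B)=(\lambda+B)(\lambda-B)$, valid on $D(A)=D(B^2)$ for every root $B$ and every $\lambda\in\bbc$. From it I would first show that $\lambda^2\in\rho(A)$ implies $\lambda\in\rho(B)$: surjectivity of $\lambda-B$ holds because $(\lambda+B)(\lambda^2-A)^{-1}$ is a right inverse (its range lies in $D(A)\subseteq D(B)$), injectivity because $\lambda^2-A$ is injective, and boundedness of the inverse follows from the closed graph theorem. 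This simultaneously yields the representations
\[
(\lambda-B)^{-1}=(\lambda+B)(\lambda^2-A)^{-1},\qquad (\lambda+B)^{-1}=(\lambda-B)(\lambda^2-A)^{-1}.
\]
Taking $\lambda^2\in(-\infty,0]\subseteq\rho(A)$ shows $0\in\rho(B)$ and $\iu\bbr\setminus\{0\}\subseteq\rho(B)$, so \ref{prop:square-root:itm:spec} in fact improves to $\sigma(B)\subseteq\{\Real\lambda>0\}$ and the whole open left half-plane belongs to $\rho(B)$.

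The decisive step is that adding the two identities cancels $B$ entirely: for $\lambda\in\iu\bbr\setminus\{0\}$,
\[
(\lambda-B)^{-1}+(\lambda+B)^{-1}=2\lambda\,(\lambda^2-A)^{-1},
\]
and the right-hand side depends on $A$ alone. Hence, for the two candidate roots, the difference $\Phi(\lambda):=(\lambda-B_1)^{-1}-(\lambda-B_2)^{-1}$, which is holomorphic on the open left half-plane and in a neighbourhood of $0$, must agree on $\iu\bbr\setminus\{0\}$ with $\Psi(\lambda):=(-\lambda-B_1)^{-1}-(-\lambda-B_2)^{-1}$, which is holomorphic on the open right half-plane. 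By analytic continuation across the imaginary axis these patch into a single entire $\mathcal{L}(H)$-valued function $E$, the potential singularity at the origin being removable because $0\in\rho(B_1)\cap\rho(B_2)$.

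Finally I would close the argument with Liouville's theorem: provided $E$ is bounded on $\bbc$ and tends to $0$ as $\modulus{\lambda}\to\infty$, it vanishes identically, whence $(\lambda-B_1)^{-1}=(\lambda-B_2)^{-1}$ throughout the left half-plane and therefore $B_1=B_2$. I expect this growth control to be the main obstacle: for a merely closed operator the resolvent need not decay at infinity, so applying Liouville requires quantitative resolvent bounds for $A$ -- exactly the sectoriality and numerical-range estimates relevant to Eq.~\eqref{PDPE} -- rather than the bare spectral hypothesis. A variant that sidesteps Liouville is to pass to the bounded injective operators $B_j^{-1}$, which are square roots of the bounded operator $A^{-1}$ with spectra in the closed right half-plane, and to invoke uniqueness of bounded square roots via the holomorphic functional calculus; there the delicate point is the branch point of the square root at $0\in\sigma(A^{-1})$, which occurs precisely when $A$ is unbounded.
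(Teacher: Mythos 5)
Your algebraic groundwork is correct and is in fact a nice elementary substitute for the spectral-mapping argument the paper cites from \cite{Haase}: the factorisation $\lambda^2-A=(\lambda-B)(\lambda+B)=(\lambda+B)(\lambda-B)$ on $D(A)$ does show that $\lambda^2\in\rho(A)$ forces $\pm\lambda\in\rho(B)$, hence $\iu\bbr\subseteq\rho(B)$ and $0\in\rho(B)$, and the identity $(\lambda-B)^{-1}+(\lambda+B)^{-1}=2\lambda(\lambda^2-A)^{-1}$ is valid and does eliminate $B$ from the right-hand side; the gluing of $\Phi$ and $\Psi$ into an entire function is also fine. The proof nevertheless does not close, for exactly the reason you flag yourself: Liouville's theorem needs the entire function $E$ to be bounded, and the hypotheses of the proposition are purely spectral --- they provide no estimate whatsoever on $\norm{(\lambda-B_j)^{-1}}$ or on $\norm{(\lambda^2-A)^{-1}}$. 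For a closed, non-normal operator the resolvent norm is not controlled by the distance to the spectrum, so $E$ may already be unbounded on the imaginary axis, and no decay at infinity is available. Since you neither prove boundedness of $E$ nor can you import it from the assumptions, the Liouville route is a genuine gap; patching it by assuming sectoriality or numerical-range bounds would prove a strictly weaker statement than the proposition, which deliberately assumes only closedness and $(-\infty,0]\subseteq\rho(A)$.

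The alternative you sketch in your final sentence is essentially the paper's actual proof: pass to the bounded operators $B_j^{-1}$, whose square is $A^{-1}$, and use the Dunford functional calculus with $f(z)=z^2$ and the principal branch $g(z)=z^{1/2}$, together with multiplicativity and the composition rule, to pin down $B_j^{-1}=g\big(f(B_j^{-1})\big)=g(A^{-1})$ for $j=1,2$, whence $B_1=B_2$. So to complete your argument you should execute this route rather than Liouville. The difficulty you identify there --- that $0\in\sigma(A^{-1})$ whenever $A$ is unbounded, while $g$ is continuous but not holomorphic at the branch point --- is a legitimate and pertinent observation (it is precisely the delicate point of the functional-calculus argument, and one would address it, e.g., by working with a functional calculus adapted to sectorial or half-plane type operators rather than the classical Dunford calculus on a neighbourhood of the spectrum); but naming the obstacle is not the same as overcoming it, so as submitted the proposal does not constitute a complete proof.
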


\begin{proof}
    Consider two closed linear operators $B\colon H\supseteq D(B) \to H$ and $\tilde{B}\colon H\supseteq D(\tilde{B}) \to H$ which satisfy the properties~\ref{prop:square-root:itm:root} and~\ref{prop:square-root:itm:spec}. 
    It follows from the spectral mapping theorem for polynomials 
    \cite[Proposition~A.6.2]{Haase}
    % JG: Rudin and Conway have the result for bounded operators only.
    % \cite[Theorem~10.29]{rudin1991functional}
    % Chapter 10)
    % or  Chapter VII  in \cite{conway2000functional}
    that the imaginary axis does not intersect $\sigma(B)$ nor $\sigma(\tilde{B})$. 
    In particular, $B^{-1}$ and $\tilde{B}^{-1}$ are bounded linear operators on $H$ whose square is equal to $A^{-1}$ and whose spectra are also contained in the open right half plane $\bbc_+ := \{z \in \bbc \mid \Real(z) > 0 \}$. 
    Now consider the holomorphic functions
    \begin{align*}
        f\colon& \bbc_+ \to \bbc \setminus (-\infty,0], \quad z \mapsto z^2, \\ 
        g\colon& \bbc \setminus (-\infty,0] \to \bbc_+, \quad z \mapsto z^{1/2},
    \end{align*}
    i.e., $g$ is the principal branch of the complex square root. 
    Note that the composition $g \circ f$ is the identity function on the domain $\bbc_+$. 
    
    Since $B^{-1}$ and $A^{-1}$ are bounded operators whose spectra are contained in $\bbc_+$ and $\bbc \setminus (-\infty,0]$, respectively, one can use the Dunford functional calculus to compute $f(B^{-1})$ and $g(A^{-1})$ (see e.g.\ \cite[Section~V.8]{TaylorLay1980} or \cite[Section~VIII.7]{Yosida1980} for details on this functional calculus).
    One has $f(B^{-1}) = \big(B^{-1}\big)^2 = \big(B^2\big)^{-1} = A^{-1}$ (here we used the multiplicativity of the functional calculus, see e.g.\ 
    \cite[Theorem~V.8.1]{TaylorLay1980} or \cite[Theorem in Section~VIII.7]{Yosida1980})
    % \cite[Theorem~V.8.1 on p.\,312]{TaylorLay1980} or \cite[Theorem in Section~VIII.7, p.\,226]{Yosida1980}) 
    and
    \begin{equation*}
        B^{-1} = (g \circ f)(B^{-1}) = 
        g\big( f(B^{-1}) \big) = g(A^{-1});
    \end{equation*}
    the first equality uses that the functional calculus is compatible with compositions of functions, see \cite[Corollary~2 in Section~VIII.7, p.\,227]{Yosida1980}. 
    The same reasoning can be applied to $\tilde{B}$ instead of $B$, giving
    $B^{-1} = g(A^{-1}) = \tilde{B}^{-1}$.
    Hence, $B = \tilde{B}$, as claimed.
\end{proof}

% JG: 
% I commented out the following paragraoh because it is redundant with what is said before the proposition.
% 
% This uniqueness result of Proposition~\ref{prop:square-root-unique} highlights the importance of the spectral condition: 
% excluding the negative real axis from the spectrum prevents ambiguities in the choice of square roots. 
% The restriction on the spectrum is natural, since any complex number $\lambda$ with $\lambda^2 = \mu$ is not uniquely determined unless one imposes a condition on the argument of $\lambda$, 
% such as requiring that its real part be nonnegative.
%
%A related result can be found in \cite[Theorem~V.3.35]{Kato1995}, 
%which requires the stronger condition that the numerical range of $A$ lies in the closed right half-plane. Our formulation is slightly more general and will be useful in subsequent applications.
%
% 
%Motivated by the above uniqueness result, we formalize the notion of a square root as follows:

Proposition~\ref{prop:square-root-unique} justifies the following definition.

%%%%%%%%%%%%%%%%%%%%%%%% Definition
\begin{definition}[The square root of an operator]
    Let $H$ be a complex Hilbert space and let $A\colon H \supseteq D(A) \to H$ be a closed linear operator 
    such that $(-\infty,0] \subseteq \rho(A)$. 
    We say that \textit{$A$ has a square root} if there exists a closed linear operator $B\colon H \supseteq D(B) \to H$ which satisfies $B^2 = A$ and $\sigma(B) \subseteq \{\lambda \in \bbc \mid \Real \lambda \ge 0\}$. 
    In this case we call $B$ \textit{the square root of $A$} and denote it by $B =: A^{1/2}$. 
\end{definition}

The existence of such a square root is not guaranteed in general.
In the following we discuss an operator theoretic result which ensures the existence of a square root under assumptions that are well-suited to the PDPE~\eqref{PDPE}. 
We need the following concept:
For a closed linear operator $A\colon H \supseteq D(A) \to H$ on a complex Hilbert space $H$, the \textit{numerical range} of $A$ is defined to be the set
\begin{equation*}
    W(A):= \bigl\{ \inner{f}{Af}  \mid  f \in D(A) \text{ and } \norm{f} = 1 \bigr\}.
\end{equation*}
The set $W(A)$ is always convex \cite[Theorem~V.3.1, p.\,267]{Kato1995}. 
The complement of its closure $\overline{W(A)}$ has either one or two connected components, and if one of them intersects $\rho(A)$, then this entire connected component is contained in $\rho(A)$ \cite[Theorem~V.3.2]{Kato1995}.

%%%%%%%%%%%%%%%%%%%%%
\begin{proposition}[Existence of square roots]
    \label{prop:square-root-exist}
    Let $H$ be a complex Hilbert space and let $A\colon H\supseteq D(A)\to H$ be a closed linear operator. 
    Assume that there is a $\delta > 0$ such that
    \begin{equation*}
        \sigma(A) \cup \overline{W(A)}
        \subseteq 
        \bbc_{\Imag \ge \delta} 
        := 
        \{\lambda \in \bbc \mid \Imag \lambda \ge \delta \}.
    \end{equation*}
    Then $A$ has a square root and the spectrum and numerical range of $A^{1/2}$ are located in the first quadrant of $\bbc$ and satisfy $\sigma(A^{1/2}) \subseteq \overline{W(A^{1/2})}$.
\end{proposition}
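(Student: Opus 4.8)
The plan is to proceed in four steps. \emph{First}, I would extract a resolvent bound from the numerical range: for every $\lambda$ with $\Imag\lambda<\delta$ one has $\lambda\notin\overline{W(A)}$, so the elementary estimate $\norm{(A-\lambda)f}\ge(\delta-\Imag\lambda)\norm{f}$ holds on $D(A)$. Since the connected half plane $\{\Imag\lambda<\delta\}$ is disjoint from $\overline{W(A)}$ and meets $\rho(A)$ far down in the lower half plane, \cite[Theorem~V.3.2]{Kato1995} places all of it in $\rho(A)$, with $\norm{(A-\lambda)^{-1}}\le(\delta-\Imag\lambda)^{-1}$. In particular $(-\infty,0]\subseteq\rho(A)$, so the Definition and Proposition~\ref{prop:square-root-unique} apply and $-\iu A$ is sectorial of angle $\pi/2$. \emph{Second}, I would set $A^{1/2}:=e^{\iu\pi/4}(-\iu A)^{1/2}$, the fractional power being furnished by the holomorphic functional calculus of the sectorial operator $-\iu A$ (see \cite{Haase}); then $(A^{1/2})^2=\iu\,(-\iu A)=A$, and the spectral mapping theorem gives $\sigma(A^{1/2})=\{\lambda^{1/2}:\lambda\in\sigma(A)\}$ for the principal branch. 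As this branch maps the upper half plane into the open first quadrant, $\sigma(A^{1/2})$ lies there, in particular in $\{\Real\ge0\}$, so by Proposition~\ref{prop:square-root-unique} it is \emph{the} square root.

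The heart of the matter is locating the numerical range, and the key idea is to read off the two half-plane bounds from two contour representations of the \emph{same} operator. Starting from $A^{1/2}=\frac1\pi\int_0^\infty t^{-1/2}A(t+A)^{-1}\,dt$ on the core $D(A)$, rotating the integration ray onto the negative real axis (permissible precisely because $\sigma(A)\subseteq\{\Imag\ge\delta\}$ keeps the relevant resolvents holomorphic and decaying along the rotation) gives $A^{1/2}=\frac{-\iu}{\pi}\int_0^\infty \tau^{-1/2}A(\tau-A)^{-1}\,d\tau$. For a unit vector $f\in D(A)$, writing $g_t=(t+A)^{-1}f$ and $h_\tau=(\tau-A)^{-1}f$, one computes $\inner{f}{A(t+A)^{-1}f}=t\inner{g_t}{Ag_t}+\norm{Ag_t}^2$ and $\inner{f}{A(\tau-A)^{-1}f}=\tau\inner{h_\tau}{Ah_\tau}-\norm{Ah_\tau}^2$. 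In the first expression the real term $\norm{Ag_t}^2$ carries no imaginary part, so $\Imag\inner{f}{A^{1/2}f}=\frac1\pi\int_0^\infty t^{1/2}\,\Imag\inner{g_t}{Ag_t}\,dt\ge0$, using only $\Imag\inner{g}{Ag}\ge\delta\norm{g}^2$; in the second, taking real parts annihilates the real term $\norm{Ah_\tau}^2$ (it is multiplied by $-\iu$) and leaves $\Real\inner{f}{A^{1/2}f}=\frac1\pi\int_0^\infty\tau^{1/2}\,\Imag\inner{h_\tau}{Ah_\tau}\,d\tau\ge0$ by the very same inequality. Hence $\inner{f}{A^{1/2}f}$ lies in $\{\Real\ge0\}\cap\{\Imag\ge0\}$, the first quadrant, for every $f$ in the core $D(A)$, and, passing to the closure, so does $W(A^{1/2})$.

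It then remains to note that $\overline{W(A^{1/2})}$ is convex and contained in the first quadrant, hence contains no line and has connected complement; as this complement meets $\rho(A^{1/2})$ (say at $-1$), \cite[Theorem~V.3.2]{Kato1995} yields $\sigma(A^{1/2})\subseteq\overline{W(A^{1/2})}$. The spectral location is routine spectral-mapping bookkeeping; the genuine difficulty is the numerical range, where one might expect to need the delicate theory of numerical ranges of fractional powers of m-accretive operators, but the two-contour computation circumvents it. I therefore expect the main obstacle to be the rigorous justification of the integral representations themselves---their convergence on $D(A)$, the legitimacy of rotating the contour onto the negative real axis, and the fact that $D(A)$ is a core for $A^{1/2}$---all of which hinge on the spectral hypothesis $\sigma(A)\subseteq\{\Imag\ge\delta\}$ and the resolvent estimate of the first step, rather than on the numerical range alone.
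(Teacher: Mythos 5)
Your construction takes a genuinely different route from the paper. The paper observes that $-\iu A$ is $m$-accretive and invokes Kato's square root theorem for $m$-accretive operators \cite[Theorem~V.3.35]{Kato1995}, which delivers in one stroke an operator $C$ with $C^2=-\iu A$, $W(C)\subseteq\{\modulus{\arg z}\le\pi/4\}$ and $C$ $m$-sectorial; the desired root is then $e^{\iu\pi/4}C$, and the inclusion $\sigma\subseteq\overline{W}$ follows by exactly the Kato~V.3.2 argument you give at the end. Your steps~1 and~2 and your final step are sound, and the algebraic identities $\inner{f}{Ag_t}=t\inner{g_t}{Ag_t}+\norm{Ag_t}^2$ and $\inner{f}{Ah_\tau}=\tau\inner{h_\tau}{Ah_\tau}-\norm{Ah_\tau}^2$ are correct and would yield the two half-plane bounds very elegantly \emph{if} the two integral representations were available.

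They are not, and the obstacle you flag in your last sentence is not a technicality but a genuine breakdown. The Balakrishnan formula $A^{1/2}f=\frac1\pi\int_0^\infty t^{-1/2}(t+A)^{-1}Af\,dt$ presupposes that $A$ is sectorial of some angle $<\pi$, i.e.\ $\norm{t(t+A)^{-1}}\le M$ uniformly. Here $\sigma(A)\cup\overline{W(A)}$ is only known to lie in the half plane $\{\Imag\ge\delta\}$, whose asymptotic directions are precisely $\arg=0$ and $\arg=\pi$: the distance from $-t$ (and from $+\tau$) to that half plane is the constant $\delta$, so the only available bound is $\norm{(t+A)^{-1}}\le 1/\delta$, and the integrand is merely $O(t^{-1/2})$ at infinity. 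Absolute convergence genuinely fails in general: for $A$ the multiplication operator by $m(\omega)=-\omega+\iu\delta$ on $L^2(0,\infty)$ and $\modulus{f(\omega)}^2\sim\omega^{-3}(\log\omega)^{-2}$ one has $f\in D(A)$ but $\int_1^\infty t^{-1/2}\norm{(t+A)^{-1}Af}\,dt=\infty$. The point is that the only rotation of $A$ that is sectorial is $-\iu A$, with angle exactly $\pi/2$, so the convergent Balakrishnan contour for your $A^{1/2}=e^{\iu\pi/4}(-\iu A)^{1/2}$ places the resolvent arguments on the negative imaginary axis; your two contours arise from rotating that ray by the full half-angle $\pm\pi/2$, which is precisely the critical case in which the contour rotation (requiring resolvent decay throughout the swept sector) is no longer permitted. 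Repairing this would mean either using regularized representations (which destroys the clean sign structure of your two identities) or proving conditional convergence plus the interchange of $\inner{f}{\cdot}$ with the improper integral --- substantial extra work that the paper's appeal to \cite[Theorem~V.3.35]{Kato1995} avoids entirely.
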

%%% 
\begin{proof}
    By assumption, the numerical range of the operator $-\iu A$ is contained in the closed right half plane of $\bbc$, which means in the terminology of \cite[Section~V.3.10, p.\,279]{Kato1995} that $-\iu A$ is \textit{accretive}. 
    It also follows from the assumption that the closed left half plane (and thus, in particular, the open left half plane) is in the resolvent set of $-\iu A$ which means, again in the terminology of \cite[Section~V.3.10, p.\,279]{Kato1995}, that $-\iu A$ is even \textit{$m$-accretive}.
    
    Hence, one can apply \cite[Theorem~V.3.35, p.\,281]{Kato1995} which says that there is a closed linear operator $C$ on $H$ that satisfies $C^2 = -\iu A$ and whose numerical range satisfies
    \begin{equation}
        \label{eq:sector-root}
        W(C)
        \subseteq
        \big\{z \in \bbc \mid  \modulus{\arg(z)} \le \frac{\pi}{4} \big\}.
    \end{equation} 
    Moreover, by the same theorem the operator $C$ is \textit{$m$-sectorial} (see \cite[Section~V.3.10, p.\,280]{Kato1995} for the definition of this notion), so in particular all $\lambda \in \bbc$ with sufficiently negative real part are in the resolvent set of $C$. 
    So the complement of $\overline{W(C)}$ intersects the resolvent set of $C$; 
    moreover, it follows from~\eqref{eq:sector-root} and the convexity of $W(C)$ \cite[Theorem~V.3.1, p.\,267]{Kato1995} that the complement of $\overline{W(C)}$ is connected. 
    The facts mentioned before Proposition~\ref{prop:square-root-exist} thus imply $\sigma(C) \subseteq \overline{W(C)}$.
    We conclude that the operator $B := e^{\iu \frac{\pi}{4}} C$ has all the required properties.
\end{proof}

Note that the reference \cite[Theorem~V.3.35, p.\,281]{Kato1995}, which we used in the proof, does not only give existence but also uniqueness of square roots -- but only among all accretive operators. 
As a consequence of Proposition~\ref{prop:square-root-exist} one gets the following well-posedness result for differential equations.

% JG:
% I commented out the following paragraph because I find it is too much of a platitude (and it needs too much space of our six pages without really saying much).
%
%This existence result of Proposition~\ref{prop:square-root-exist} shows how geometric constraints on the location of the numerical range (specifically, that it is bounded away from the negative real axis) not only imply the existence of the square root, but also provide information about its spectral and numerical range.

\begin{corollary}[Generation theorem for $\iu$ times a square root]
    \label{cor:square-root-generation}
    Under the assumptions of Proposition~\ref{prop:square-root-exist} the operator $\iu A^{1/2}$ generates a contractive $C_0$-semigroup on $H$.
\end{corollary}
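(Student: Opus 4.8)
The plan is to verify the hypotheses of the Lumer--Phillips theorem for the operator $G := \iu A^{1/2}$ and thereby conclude that it generates a contraction $C_0$-semigroup. Recall that on a Hilbert space this theorem requires three things: that $G$ be densely defined, that $G$ be dissipative (equivalently $\Real \inner{f}{Gf} \le 0$ for all $f \in D(G)$, i.e.\ $W(G)$ lies in the closed left half plane), and that the range of $\lambda - G$ be all of $H$ for some $\lambda > 0$. I would take as the starting point the conclusions recorded in Proposition~\ref{prop:square-root-exist}, namely that $B := A^{1/2}$ exists, that $\overline{W(B)}$ and $\sigma(B)$ are contained in the closed first quadrant, and that $\sigma(B) \subseteq \overline{W(B)}$; I would also reuse the factorisation $B = e^{\iu \pi/4} C$ from that proof, where $C$ is $m$-sectorial.

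First I would settle dissipativity. Since $\inner{f}{Gf} = \iu \inner{f}{Bf}$ for every $f \in D(B) = D(G)$, we have $W(G) = \iu\, W(B)$. Multiplication by $\iu$ rotates the closed first quadrant (where $\overline{W(B)}$ lives) onto the set of points with argument in $[\pi/2,\pi]$, i.e.\ the closed second quadrant, which is contained in $\{z \in \bbc \mid \Real z \le 0\}$. Hence $\Real \inner{f}{Gf} \le 0$ and $G$ is dissipative. Density of $D(G) = D(C)$ follows because $m$-sectorial operators are by definition densely defined.

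Next I would establish the range condition through the spectrum. By the spectral mapping for the scalar factor $\iu$ one has $\sigma(G) = \iu\, \sigma(B)$, and since $\sigma(B)$ sits in the closed first quadrant the same rotation argument places $\sigma(G)$ in the closed second quadrant, hence in $\{\lambda \in \bbc \mid \Real \lambda \le 0\}$. Consequently every $\lambda > 0$ lies in $\rho(G)$, so in particular $\lambda - G$ is bijective --- and a fortiori surjective --- for such $\lambda$. With density, dissipativity, and this range condition in hand, the Lumer--Phillips theorem yields that $G = \iu A^{1/2}$ generates a contraction $C_0$-semigroup.

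The routine parts are the two rotation computations; the only points demanding care are the bookkeeping of signs and arguments (checking that it is $\iu A^{1/2}$, and not $-\iu A^{1/2}$, whose numerical range and spectrum land in the left half plane) and the justification that $D(G)$ is dense, which I would anchor in the $m$-sectoriality of $C$ from the previous proof rather than re-deriving it. An equivalent and essentially identical route would be to observe that the three conditions above say precisely that $-G = -\iu A^{1/2}$ is $m$-accretive, and then to invoke the standard equivalence between $m$-accretivity of $-G$ and generation of a contraction semigroup by $G$.
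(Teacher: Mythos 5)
Your proposal is correct and follows essentially the same route as the paper: the paper likewise observes that the spectrum and numerical range of $\iu A^{1/2}$ land in the closed second quadrant, concludes that $\iu A^{1/2}$ is $m$-dissipative, and invokes the Lumer--Phillips theorem. You merely unpack the definition of $m$-dissipativity into the three Lumer--Phillips hypotheses (dense domain, dissipativity, range condition), which is a harmless elaboration rather than a different argument.
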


\begin{proof}
    According to Proposition~\ref{prop:square-root-exist} the square root $A^{1/2}$ exists, and the spectrum and numerical range of $\iu A^{1/2}$ are located in the second quadrant of $\bbc$. 
    This means that $\iu A^{1/2}$ is \textit{$m$-dissipative} (which is another term for saying that minus the operator is $m$-accretive), so it generates a contractive $C_0$-semigroup according to the Lumer--Phillips generation theorem \cite[Corollary~II.3.20]{Engel-Nagel}.
\end{proof}

The following example shows why the operator $A = \frac{\partial^2}{\partial y^2} +  k^2$ in the PDPE~\eqref{PDPE} satisfies the assumptions of Proposition~\ref{prop:square-root-exist} and Corollary~\ref{cor:square-root-generation} if the real and imaginary parts of $k$ are positive and bounded away from $0$. 
In this case one can choose $L = \partial^2/\partial y^2$ and $m = k^2$.
%%%%%%%%%%%%%%%%%%
\begin{example}
    \label{exa:selfadjoint-plus-mult}
    Let $H = L^2(\Omega)$ for a domain $\Omega \subseteq \bbr^n$, let $L\colon H \supseteq D(L) \to H$ be a self-adjoint linear operator and let $m\colon \Omega \to \bbc$ be a bounded and continuous (or, more generally, bounded and measurable) function that satisfies $\Imag(m(\omega)) \ge \delta$ for a $\delta > 0$ and all $\omega \in \Omega$. 
    Then the operator $A := L+m$ with domain $D(A) := D(L)$ satisfies $\sigma(A) \cup \overline{W(A)} \subseteq \bbc_{\Imag \ge \delta}$, so Proposition~\ref{prop:square-root-exist} and Corollary~\ref{cor:square-root-generation} are applicable to $A$.
\end{example}

\begin{proof}
    Since $L$ is self-adjoint one has $W(L) \subseteq \bbr$, and it is easy to check that $W(m)$ is contained in the closure of the range of $m$. 
    Hence, 
    \begin{align*}
        W(A) 
        \subseteq 
        W(L) + W(m) 
        \subseteq 
        \bbr + \bbc_{\Imag \ge \delta} 
        \subseteq 
        \bbc_{\Imag \ge \delta}
        .
    \end{align*}
    Since $m$ is a bounded perturbation, all numbers with sufficiently negative imaginary part are contained in $\rho(L+m)$. 
    To see this, first note that $\norm{(\lambda-L)^{-1}} \le \frac{1}{\modulus{\Imag \lambda}}$ for all $\lambda \in \bbc \setminus \bbr$ since $L$ is self-adjoint and then conclude that $\lambda - (L+m) =  \big(1 - m(\lambda-L)^{-1} \big)(\lambda-L)$ is invertible for $\norm{m}_\infty < \modulus{\Imag \lambda}$ by using the Neumann series. 
    So the connected component of $\bbc \setminus \overline{W(A)}$ that contains $\bbc \setminus \bbc_{\Imag \ge \delta}$ intersects the resolvent set $\rho(A)$ and hence is contained in $\rho(A)$, as pointed out before Proposition~\ref{prop:square-root-exist}. 
    So $\sigma(A) \subseteq \bbc_{\Imag \ge \delta}$.
\end{proof}

%%%%%%%%%%%%%%%%%%
\begin{remark}
    Example~\ref{exa:selfadjoint-plus-mult} gives the well-posedness of the PDPE~\eqref{PDPE} under the assumptions discussed before the example. 
    The argument assumed that $k^2(x,y)$ does not depend on $x$, but it can be directly generalized to the case where $k^2(x,y)$ is piecewise-constant in $x$. 
    More precisely, assume that the interval $x\in [0,L]$ is divided into a set of $N$ subintervals $[x_{j-1},x_{j}]$, $j = 1,\dots,N$, where $x_0=0$, $x_{N}=L$, and $k^2(x,y) = k_j^2(y)$ for all $x\in [x_{j-1},x_{j}]$. 
    Then the Cauchy problem~\eqref{PDPE} can be solved piecewise on the subintervals $[x_{j-1},x_{j}]$.
    % JG: 
    % I remove the final sentence about well-posedness in this case since I think one really has to be careful how precisely the notion 'well-posedness' is defined in this case.
\end{remark}

%%%%%%%%%%%%%%%%%%%%%%%%%%%%%%%%%%%%
\section{Conclusion}

We provided a theoretical foundation % basis 
for the PDPEs theory, which is widely used in the numerical simulation of wave dynamics \cite{jensen2011,LuReview,lytaev2022rational,he2024,collins1994}. 
First, we gave a rigorous definition of the square root operator appearing in such equations. 
Second, we established the well-posedness
%
%: JG: Commented out because existence and uniqueness are part of well-posedness:
% existence, uniqueness, and well-posedness results
%
% JG: Commented out because it is the problem that is well-posed, not its solution:
% for the solution 
%
of the corresponding Cauchy problem. 
Due to the abstract nature of the proofs, they cover most typical Cauchy problem setups that arise in practice (e.g., it is sufficient that the initial data $u_0$ is square integrable), although the dependence of the problem parameters on the range (i.e., on $x$) is restricted to piecewise constant functions.
 In many physics and engineering problems, this is exactly the way the information about the medium is usually given \cite{jensen2011}.
 On the other hand, it is desirable to establish the same results as above for more general types of range-dependent media (we plan to address this in future work).

%%%%%%%%%%%%%%%%%%%%%%%%%%%%%%%
\section*{Acknowledgements.}

We are grateful to Markus Haase and Christian Wyss for several helpful discussions.
% https://www.math.uni-kiel.de/analysis/de/haase/prof.-dr.-markus-haase

%%%%%%%%%%%%%%%%%%%%%%%%%%%%%%%%%%%%%%%%%%%%%%%%%%%%%%%%%


\begin{thebibliography}{00}

\bibitem{leontovich}
M.~Leontovich, V.~Fock, Solution of the problem of electromagnetic wave
  propa\-gation along the earth`s surface by the method of parabolic equation,
  J. Phys. USSR 10 (1946) 13--23.

\bibitem{claerbout1985}
J.~Claerbout, Fundamentals of Geophysical Data Processing: With Applications to
  Petroleum Prospecting, Blackwell Scientific Publications, 1985.

\bibitem{jensen2011}
F.~B. Jensen, M.~B. Porter, W.~A. Kuperman, H.~Schmidt, Computational Ocean
  Acoustics, 2nd Edition, Springer, 2011.

\bibitem{LuReview}
Y.~Y. Lu, Some techniques for computing wave propagation in optical waveguides,
  Commun. Comput. Phys. 1~(6) (2006) 1056--1075.

\bibitem{lytaev2022rational}
M.~Lytaev, Rational interpolation of the one-way {H}elmholtz propagator, J.
  Comput. Sci. 58 (2022) 101536.

\bibitem{fishman1984}
L.~Fishman, J.~McCoy, Derivation and application of extended parabolic wave
  theories. {I}. {The} factorized {Helmholtz} equation, J. Math. Phys. 25~(2)
  (1984) 285--296.

\bibitem{petrov2020}
P.~S. Petrov, X.~Antoine, Pseudodifferential adiabatic mode parabolic equations
  in curvilinear coordinates and their numerical solution, J. Comput. Phys. 410
  (2020) 109392.

\bibitem{he2024}
T.~He, J.~Liu, S.~Ye, X.~Qing, S.~Mo, A novel model order reduction technique
  for solving horizontal refraction equations in the modeling of
  three-dimensional underwater acoustic propagation, J. Sound Vibr. 591 (2024)
  118617.

\bibitem{collins1994}
M.~D. Collins, Generalization of the split-step {P}ad{\'e} solution, J. Acoust.
  Soc. Amer. 96~(1) (2015) 382--385.

\bibitem{petrov2024generalization}
P.~S. Petrov, M.~Ehrhardt, S.~B. Kozitskiy, A generalization of the split-step
  {P}ad{\'e} method to the case of coupled acoustic modes equation in a 3d
  waveguide, J. Sound Vibr. 577 (2024) 118304.

\bibitem{Engel-Nagel}
K.-J. Engel, R.~Nagel, One-Parameter Semigroups for Linear Evolution Equations,
  Vol. 194 of Graduate Texts in Mathematics, Springer, 2000.

\bibitem{Haase}
M.~Haase, The Functional Calculus for Sectorial Operators, Vol. 169 of Operator
  Theory: Advances and Applications, Springer, 2006.

\bibitem{TaylorLay1980}
A.~Taylor, D.~Lay, Introduction to Functional Analysis, 2nd Edition, {John}
  {Wiley} \& {Sons}, 1980.

\bibitem{Yosida1980}
K.~Yosida, Functional Analysis, 6th Edition, Vol. 123 of Grundlehren Math.
  Wiss., Springer, Cham, 1980.

\bibitem{Kato1995}
T.~Kato, Perturbation Theory for Linear Operators., Vol. 132 of Class. Math.,
  Springer, 1995.


\end{thebibliography}
\end{document}